\renewenvironment{IEEEbiography}[1]
  {\IEEEbiographynophoto{#1}}
  {\endIEEEbiographynophoto}
\newcolumntype{P}[1]{>{\centering\arraybackslash}p{#1}}
\newcolumntype{x}[1]{>{\centering\arraybackslash}p{#1}}
\newtheorem{proposition}{\bf Proposition}
\newtheorem{remark}{Remark}
\newcommand{\ve}[1]{\boldsymbol{#1}}
\newcolumntype{I}{!{\vrule width 1.2pt}}
\def\hlinewd#1{%
\noalign{\ifnum0=`}\fi\hrule \@height #1 %
\futurelet\reserved@a\@xhline}
\let\ps@plain\ps@fancy
\begin{document}

\bstctlcite{IEEEexample:BSTcontrol} 
\title{Game-Theoretic Electric Vehicle Charging Management Resilient to Non-Ideal User Behavior}

\author{Chathurika~P.~Mediwaththe,~\IEEEmembership{Student~Member,~IEEE,}~David~B.~Smith,~\IEEEmembership{Member,~IEEE,}%
\thanks{Manuscript received (date to be filled in by Editor). }
\thanks{C. P. Mediwaththe and D. B. Smith are with Data61 (NICTA), CSIRO,
Eveleigh, NSW 2015, Australia. e-mail: (m.mediwaththe@unsw.edu.au, David.Smith@data61.csiro.au).}%
\thanks{C. P. Mediwaththe is also with the University of New South Wales, Australia.}
\thanks{D. B. Smith is also with the Australian National University, Australia.}%
}

\maketitle
\thispagestyle{fancy} 
\begin{abstract}

In this paper, an electric vehicle (EV) charging competition, among EV aggregators that perform coordinated EV charging, is explored while taking into consideration potential non-ideal actions of the aggregators. In the coordinated EV charging strategy presented in this paper, each aggregator determines EV charging start time and charging energy profiles to minimize overall EV charging energy cost by including consideration of the actions of the neighboring aggregators. The competitive interactions of the aggregators are modeled by developing a two-stage non-cooperative game among the aggregators. The game is then studied under prospect theory to examine the impacts of non-ideal actions of the aggregators in selecting EV charging start times according to subjectively evaluating their opponents' actions. It is shown that the non-cooperative interactions among the aggregators lead to a subgame perfect $\epsilon$-Nash equilibrium when the game is played with either ideal, or non-ideal, actions of the aggregators. A case study presented demonstrates that the benefits of the coordinated EV charging strategy, in terms of energy cost savings and peak-to-average ratio reductions, are significantly resilient to non-ideal actions of the aggregators.

\end{abstract}

\begin{IEEEkeywords}
Aggregator, electric vehicle (EV), expected utility theory, game theory, grid-to-vehicle, non-ideal user behavior, prospect theory.
\end{IEEEkeywords}

\section{Introduction}
Escalating fuel prices and environmental concerns have increased the market penetration of electric vehicles (EVs) worldwide. For example, a recent study \cite{EVdat} has shown that the annual growth rate of EV sales in the United States is more than 20\%. Despite environment-friendly features, uncoordinated EV charging creates challenges to both economical and technical aspects of the power grid as a consequence of excessive load consumption. Due to the escalation in electricity demand provoked by EVs, demand-side management has become a paramount element in power system operation as it can decrease EV charging costs by coordinating the grid-to-vehicle operation economically. EV aggregators have been widely adopted in the present energy market with the aim of facilitating coordinated charging of EVs at large scale. An EV aggregator acts as a middleman, between a fleet of EVs and the power grid, to optimally regulate the charging plan of the vehicle fleet so as to minimize overall cost of EV charging considering EV charging constraints \cite{Aliprantis,ZechunHu}.

Effective demand-side management of grid-to-vehicle operation requires active participation of users who are either vehicle owners or EV aggregators that seek to optimize individual goals through coordinated EV charging. However, in the long run, users might deviate from their ideal participating behavior despite the benefits that they gain through participating in demand-side management. Unpredictable non-ideal behavior of users is likely to compromise the economic benefits and system efficiencies of demand-side management. A successful energy management approach for EV charging is often challenging with inconsistent user behavior \cite{Franke}. 

In this paper, the EV charging competition among multiple EV aggregators in a coordinated EV charging system is studied while accounting for actions of the aggregators that are not completely rational. Each aggregator, which could be run by, e.g., a car park manager, determines EV charging start time and charging energy profiles to minimize EV charging energy cost by considering the actions of the neighboring EV aggregators. The interactions among the EV aggregators are modeled using a non-cooperative game-theoretic framework, which is then studied under two user behavioral models, expected utility theory and prospect theory, to incorporate aggregators' ideal and non-ideal actions in selecting EV charging start times in the EV charging competition. The main contributions of this paper can be stated as follows:
\begin{itemize}
\item We model the coordinated EV charging competition among the EV aggregators as a two-stage non-cooperative game and show that there exists a subgame perfect $\epsilon$-Nash equilibrium when the game is played with either ideal, or non-ideal, actions of the aggregators.

\item We analyze the impacts of non-ideal actions of the EV aggregators through an extensive performance analysis and show that the benefits of the coordinated EV charging strategy, in terms of peak-to-average ratio reductions and EV charging cost savings, are resilient to non-ideal actions taken by the aggregators. 
\end{itemize}  

Prospect theory has emerged as a prominent tool to examine non-ideal user behavior that departs from the rational choices in game theory. To the best of our knowledge, few works related to demand-side management have applied prospect theory to study non-ideal, realistic behavior of participants that cannot be explained by assuming consumer rationality \cite{Wang,ISIE}. For example, \cite{Wang} elaborates insights from prospect theory to study realistic consumer decision-making in a load-shifting demand response program for residential households. In \cite{ISIE}, prospect theory is used to study the user behavior in a Stackelberg game-theoretic energy trading system between a community energy storage device and photovoltaic energy users. Due to the inherent differences in system models and associated constraints between the EV charging scenario and the residential load-shifting and bi-level energy trading systems, the prospect theory-based analyses in \cite{Wang,ISIE} cannot be directly applied to the EV charging scenario considered in this paper. Furthermore, to the best of our knowledge, in literature, the effects of non-ideal actions of participating users on an EV charging scenario have not been investigated using prospect theory. 

The remainder of this paper is organized as follows. Section~\ref{sec:2} presents related work. Section~\ref{sec:3} describes the EV charging system configuration, and Section~\ref{sec:4} describes the two-stage non-cooperative game among the EV aggregators. Section~\ref{sec:5} elaborates the non-cooperative EV charging energy determination game among the aggregators and Section~\ref{sec:6} discusses the participation time selection game of the aggregators. Section~\ref{sec:7} presents numerical results and Section~\ref{sec:8} concludes the paper.

\section{Related Work}\label{sec:2}
Strategic decision-making of users has been widely investigated, by modeling user behavior, in many user-centric applications in sociology, economics, and energy markets \cite{Shen,Harrison,Chrysopoulos}. Game theory has been a popular analytical platform in demand-side management literature to model how users strategically reason about their neighbors' behavior to determine their own energy strategies such that the system-wide objectives are optimized \cite{Fadlullah,Deng,Soliman,Song,Su2014341}. The demand-side management approach in \cite{Mohsen} optimally schedules the user energy consumption through a non-cooperative game among energy users. The non-cooperative Stackelberg game in \cite{Mengmeng} explores optimal load interaction between a utility company and residential energy users to balance energy demand and supply.  Moreover, the insights of game theory have been used to study the energy trading competition between shared energy storage systems and self-interested solar energy users for small-scale demand-side management \cite{Chathurika1, Chathurika2}.

Game theory is also used in the context of distributed optimal operation in EV charging systems \cite{Dextreit,Lee,Bahrami, Malandrino}. Based on game theory, \cite{Lee} explores optimal price competition among EV charging stations with renewable energy generation to attract EVs. Using a non-cooperative game-theoretic approach, \cite{Bahrami} investigates an optimal demand response method for effective scheduling of plug-in hybrid EV charging. By investigating a non-cooperative game among self-interested EV owners, an optimal valley filling pricing mechanism to coordinate EV charging is proposed in \cite{ZechunHu}. Using a non-cooperative game among EVs at a parking lot, \cite{Zhang} studies an effective way of obtaining EV charging strategies at a unique Nash equilibrium under the constrained distribution transformer capacity. The non-cooperative Stackelberg game-theoretic approach in \cite{Tushar} investigates the optimal bi-level grid-to-vehicle coordination between the power grid and plug-in EV groups with limited energy supply. Most of the game-theoretic demand-side management literature optimizes important aspects of various energy markets conforming to conventional game theory that fundamentally assumes rational choices by users. 

Empirical evidence from social studies has illustrated that the axiom of rationality of game theory can be violated when users face risk and uncertainty in decision-making \cite{Kahneman}. In \cite{Sanstad}, discrepancies between rational choice and alternative user behavioral models in an energy market are reviewed. Prospect theory has attracted great attention in various research communities to understand impacts of deviations from rationality \cite{Li,LiTrans,Wang}. A comprehensive discussion on the potential of prospect theory to understand how decision-making under risk and uncertainty can affect various smart grid applications is given in \cite{Glass}. In \cite{Wang2}, optimal energy exchange among geographically-distributed consumer-owned energy storage devices is compared under classical game theory and prospect theory. In \cite{Wang3}, the effects of subjective consumer behavior on an energy storage charging-discharging system are investigated using a utility framing method derived from prospect theory. In contrast to previous work, this paper presents precepts of prospect theory to realize potential non-ideal behavior of EV aggregators in a non-cooperative game-theoretic EV charging system.

\section{System Configuration}\label{sec:3}
This section describes the formulation of the system models of EV aggregators and the cost models that are used to derive energy costs in this paper. 

\subsection{Electric Vehicle Charging Model}\label{sec:3_1}

In this paper, a low-voltage power grid with multiple EV aggregators distributed along the grid is considered. Here, an EV aggregator acts as an intermediary between the power grid and a fleet of EVs and regulates and schedules charging of the connected EVs considering EV charging constraints. In this work, only grid-to-vehicle operation with unidirectional power flow is considered.

In the system model, the set of EV aggregators is denoted as $\mathcal{N}$ and $|\mathcal{N}|=N$. Each aggregator $i\in \mathcal{N}$ controls an EV charging station that consists of multiple EV chargers within a localized geographical area. For example, such charging stations may be located in car parks at workplaces, universities, and shopping centers. Each aggregator $i$ regulates charging of a set of EVs $\mathcal{V}_i$ over a set period of time $\mathcal{T}$ of the day and $|\mathcal{V}_i|=V_i$. In this paper, the EV charging time frame $\mathcal{T}$ is assumed to span from $8.00$~AM to $4.00$~PM considering a workplace charging scenario.  In this case, it is considered that EVs in $\mathcal{V}_i$ arrive at the charging station by $8.00$~AM and park at the charging station for the entire time period of $\mathcal{T}$, committing to an agreement between EV owners and the aggregator $i$. The charging horizon $\mathcal{T}$ is divided into $T$ number of time slots of $\Delta t$ length and the control time is denoted by $t\in \{1,2,\dotsm,T\}$. In this situation, the aggregators, which may be run by car park managers, can coordinate EV charging efficiently to charge EVs within the considered time period $\mathcal{T}$. At the end of time period $\mathcal{T}$, it is required to charge all EVs in $\mathcal{V}_i$ to their maximum state-of-charge (SOC) limits. Furthermore, this work assumes that additional EVs will not join the set $\mathcal{V}_i$ during the time period $\mathcal{T}$ similar to the EV charging framework considered in \cite{Richardson}.

Without loss of generality, it is considered that there is a base load profile $\ve{L}_b$ on the grid that is not time-flexible and $\ve{L}_b$ is given by

\begin{equation}
\ve{L}_b=(L_{b,1},\dotsm,L_{b,t},\dotsm,L_{b,T})\label{eq1}
\end{equation}
where $L_{b,t} (>0)$ is the base grid load at time $t$. For example, $\ve{L}_b$ may constitute non-deferrable loads of users of the grid. From the utility's perspective, $L_{b,t}$ represents the minimum load that the utility should provide at time $t$. 

For each aggregator $i$, there is an energy demand $E_i$ that is equal to the total energy demands of the EVs in $\mathcal{V}_i$. The energy demand of an EV $v\in \mathcal{V}_i$ is given by

\begin{equation}
e_v=S_{v,\text{init}}-S_{v, \text{max}}\label{eq2}
\end{equation}
where $S_{v,\text{init}}$ and $S_{v, \text{max}}$ are the SOC level at the beginning of $\mathcal{T}$ and maximum SOC level of the battery of EV $v$, respectively. Using \eqref{eq2}, $E_i$ can be written as $E_i=\sum_{v=1}^{V_i}e_v$. In the EV charging model, each aggregator $i$ supplies energy $E_i$ to the EVs in $\mathcal{V}_i$ by distributing it across time $\mathcal{T}$. In this situation, the temporal grid energy consumption profile of aggregator $i$ over the time period $\mathcal{T}$ is given by

\begin{equation}
\ve{x}_i=(x_{i,1},\dotsm,x_{i,t},\dotsm,x_{i,T})\label{eq3}
\end{equation}
 where $x_{i,t}$ is the energy amount taken from the grid by aggregator $i$ at time $t$. 
 
To consider conversion losses of EV chargers controlled by aggregator $i$, a charging efficiency parameter $\eta_i$ is introduced such that $0<\eta_i\leq 1$. For example, if $x_{i,t}$ amount of energy is taken from the grid by aggregator $i$, only $\eta_ix_{i,t}$ amount is effectively dispatched for charging EVs in $\mathcal{V}_i$. Given \eqref{eq3} and $\eta_i$, $E_i$ of each aggregator $i$ satisfies
 
 \begin{equation}
E_i=\sum_{t=1}^T \eta_ix_{i,t}.\label{eq4}
\end{equation}

It is considered that each EV $v\in \mathcal{V}_i$ is charged using a charging rate between a maximum charging rate $R_v$ and a minimum charging rate taken as zero at time $t$ \cite{Richardson,Clement}. This gives

 \begin{equation}
0\leq r_{v,t}\leq R_v\label{eq5}
\end{equation}
where $r_{v,t}$ is the charging rate of EV $v\in \mathcal{V}_i$ at time $t$. Considering \eqref{eq5} for all EVs in $\mathcal{V}_i$, energy consumption $x_{i,t}$ satisfies

 \begin{equation}
0 \leq \frac{\eta_ix_{i,t}}{\Delta t} \leq \sum_{v=1}^{V_i}R_v.\label{eq6}
\end{equation}
In this paper, it is assumed that the total power demand of aggregators at each time $t$ can be obtained from the grid without violating the grid voltage and capacity constraints.

\subsection{Energy Cost Models}\label{sec:3_2}

In this paper, a dynamic grid cost function that consists of both real time and time-of-use pricing elements is considered \cite{Mohsen}. Given the total grid load as $L_t=L_{b,t}+\sum_{i=1}^N x_{i,t}$, the grid cost function at time $t$ is given as 

 \begin{equation}
P_t(L_t)=\phi_tL_t^2+\delta_tL_t\label{eq7}
\end{equation}
where $\phi_t$ and $\delta_t$ are positive time-of-use tariff constants at time $t$. The cost function \eqref{eq7} can be regarded as a quadratic function that approximates piecewise linear pricing models adopted by some electric utility companies \cite{Mohsen,FLi}. By incorporating time-of-use and real time pricing components with these cost models, users can be encouraged to shift their peak demand to non-peak hours \cite{Mohsen,Lambotharan,Mengmeng}. According to \eqref{eq7}, per unit electricity price of the grid at time $t$, $p_t$, is given as $p_t=\phi_tL_t+\delta_t$, and the resulting grid energy cost of aggregator $i$ at time $t$ is given by $p_tx_{i,t}$. 

In addition to the grid energy cost, another cost component $D_{i,t}$ is defined for each aggregator $i$ to model their utility based on the deviation between the actual energy consumption and the target energy consumption at time $t$ \cite{Jiang}. Denoting the target energy consumption of aggregator $i$ at time $t$ as $\bar{x}_{i,t}$, $D_{i,t}$ is given by

\begin{equation}
D_{i,t}=\begin{cases}
                g_{i,t} (\bar{x}_{i,t}-{x}_{i,t})^2,~~\text{if}~0\leq {x}_{i,t} < \bar{x}_{i,t}, \\
                 0,~~~~~~~~~~~~~~~~~~~~\text{if}~{x}_{i,t}\geq \bar{x}_{i,t}
                   \end{cases}\label{eq8}
\end{equation}
where $g_{i,t}(>0)$ is a weighting parameter related to aggregator $i$ that measures how the cost component $(\bar{x}_{i,t}-{x}_{i,t})^2$ affects total aggregator cost function. In this paper, the target energy demands $\bar{x}_{i,t}$ of each aggregator $i$ are evaluated such that $\bar{x}_{i,t}$ at time $t$ is equal to the average of the total demand that needs to be supplied within the time frame $[t,t+1,\dotsm,T]$. In particular, $\bar{x}_{i,t}$ is evaluated such that $\bar{x}_{i,t}=(E_i-\sum_{t=t_i}^{(t-1)}x_{i,t})/((T-t)+1);~t_i\leq t \leq T$. The cost term \eqref{eq8} motivates charging of EVs connected to each aggregator $i$ above the average demand at each time $t$. In doing so, it encourages drawing more energy at the beginning of time frame $t_i\leq t \leq T$ so that charge levels of EV batteries can be maintained at reasonable values even if the EVs depart the system prior to the expected departure time.

In this framework, if aggregator $i$ charges EV $v\in \mathcal{V}_i$ using maximum charging rate $R_v$, then they require ${e_v}/{R_v}$ number of time slots to reach $S_{v, \text{max}}$ from $S_{v,\text{init}}$. It is assumed that if aggregator $i$ starts EV charging at time slot $t_i\in \{1,2,\dotsm,T\}$, then they continue the charging process for $t_i\leq t \leq T$. For instance, if aggregator $i$ starts charging the EVs in $\mathcal{V}_i$ at time slot 3, then the aggregator determines $(x_{i,3},~x_{i,4},\dotsm,~x_{i,T})$ such that $E_i=\sum_{t=3}^{T}\eta_ix_{i,t}$. Given this assumption and if aggregator $i$ charges each EV $v\in \mathcal{V}_i$ using their maximum charging rate $R_v$, then the aggregator requires $\tau_i$ time slots to finish charging all EVs in $\mathcal{V}_i$ where

 \begin{equation}
\tau_i=\text{max}(e_1/R_1,~e_2/R_2, \dotsm,~e_{V_i}/R_{V_i}).\label{eq9}
\end{equation}

Since aggregator $i$ can vary the charging rates of each EV in $\mathcal{V}_i$ according to \eqref{eq5}, $\tau_i$ represents the minimum number of time slots that aggregator $i$ requires to charge all EVs in $\mathcal{V}_i$. This implies that for a given $\tau_i$, aggregator should start EV charging at least from time slot $\tilde{t}_i$ where $\tilde{t}_i=(T-\tau_i)+1$. Hence, aggregator $i$ can start EV charging between time slots 1 and $\tilde{t}_i$. Then, the set of all possible EV charging start times $\mathcal{I}_i$ of aggregator $i$ can be written as $\mathcal{I}_i=\{1,2,\dotsm,\tilde{t}_i\}$ where $\mathcal{I}_i\subseteq\{1,2,\dotsm,T\}$. If aggregator $i$ starts the EV charging process at time slot $t_i\in \mathcal{I}_i$, then their total energy cost is given by

 \begin{equation}
C_i=\sum_{t=t_i}^T(p_tx_{i,t}+D_{i,t}).\label{eq10}
\end{equation}

\section{Two-stage Non-cooperative Game}\label{sec:4}

To analyze how each aggregator $i$ determines their EV charging start time $t_i$ and EV charging energy amounts $x_{i,t}$ at each time $t$, a two-stage non-cooperative game $\Upsilon$ among the aggregators $\mathcal{N}$ is developed. At the first stage of the game $\Upsilon$, the aggregators $\mathcal{N}$ non-cooperatively determine their EV charging start time $t_i \in\mathcal{I}_i$ with imperfect information \cite{gametheoryessentials}. After observing how the aggregators have selected $t_i\in \mathcal{I}_i$ at the first stage, each aggregator $i$ non-cooperatively determines $x_{i,t}$ with imperfect information at the second stage.

Since each aggregator $i$ has $|\mathcal{I}_i|$ number of possible actions in total at the first stage of the game $\Upsilon$, the extensive form of the game $\Upsilon$ implies that at the second stage, the game $\Upsilon$ has $K=(|\mathcal{I}_1|\times|\mathcal{I}_2|\times \dotsm \times |\mathcal{I}_N|)$ number of proper subgames \cite{Fudenberg}. It is considered that the set of proper subgames at the second stage are as $\mathcal{G}=\{G_1,~G_2,\dotsm,G_K\}$. Then the game $\Upsilon$ has $(K+1)$ number of subgames including the entire game itself. In this paper, the strategic form of the game $\Upsilon$ is described as follows.

\begin{itemize}
\item{\textit{Players}}: The set of aggregators $\mathcal{N}$.
\item{\textit{Strategies}}: Each aggregator $i$ selects $(t_i,\ve{x}_i)$ to maximize their payoff. In particular, at the first stage, each aggregator $i$ determines $t_i\in \mathcal{I}_i$. Then at the second stage, depending on the selected $t_i$ at the first stage, each aggregator $i$ determines $\ve{x}_i=(x_{i,t_i},x_{i,t_i+1},\dotsm,x_{i,T})$ such that $\ve{x}_i\in \mathcal{X}_i$ where $\mathcal{X}_i$ subject to constraints \eqref{eq4} and \eqref{eq6}.
\item{\textit{Payoffs}}: For an action profile $(t_i,\ve{x}_i)$, aggregator $i$ receives a payoff
\begin{equation}
U_i=-C_i=-\sum_{t=t_i}^T (p_tx_{i,t}+D_{i,t}).\label{eq11}
\end{equation}
\end{itemize}

To determine the solutions of the game $\Upsilon$, first, the optimal solutions for each subgame in $\mathcal{G}$ at the second stage are evaluated. Then the analysis proceeds backwards to the first stage where each aggregator $i$ determines optimal $t_i\in \mathcal{I}_i$ that maximizes their payoffs, which result if the optimal actions determined for each subgame in $\mathcal{G}$ are adopted by the aggregators $\mathcal{N}$ at the second stage. The explicit analyses of these two steps are given in the next two sections.

\section{Second Stage Game: Charging Energy Determination Game}\label{sec:5}

This section explains the process of determining EV charging energy amounts by each aggregator $i$ once they have selected to start EV charging from time slot $t_i\in \mathcal{I}_i$. Let $G_{\sigma}\in \mathcal{G}$ be the subgame among the aggregators $\mathcal{N}$ at the second stage if they adopt an EV charging start time profile $\ve{\sigma}=(t_1,t_2,\dotsm,t_N) \in \mathcal{I}$ at the first stage of the game $\Upsilon$. Here, $\mathcal{I}$ denotes the cartesian product of strategy sets $\mathcal{I}_i$ of the aggregators $\mathcal{N}$ that is given by $\mathcal{I}=\{\mathcal{I}_1\times\mathcal{I}_2\times\dotsm\times\mathcal{I}_N\}$. Because each aggregator $i$ continues EV charging for $t_i\leq t\leq T$ once they have selected to start EV charging at time $t_i$, in the game $G_{\sigma}$, each aggregator $i$ seeks to maximize their individual payoff given in \eqref{eq11}. In this scenario, their individual decisions on $x_{i,t}$ are influenced by each others' energy consumption decisions due to the aggregate load dependency of the grid price $p_t$. The strategic form of the game $G_{\sigma}$ is given as follows.

\begin{itemize}
\item{\textit{Players}}: The set of aggregators $\mathcal{N}$.
\item{\textit{Strategies}}: Each aggregator $i$ determines $\ve{x}_i\in \mathcal{X}_i$ to maximize payoff.
\item{\textit{Payoffs}}: Each aggregator $i$ receives a payoff given by \eqref{eq11}.
\end{itemize}

In the game $G_{\sigma}$, each aggregator $i$ solves the local optimization problem to determine

 \begin{equation}
\tilde{\ve{x}}_i=\underset {\ve{x}_i \in \mathcal{X}_i}{\textrm{argmax}}~U_i(\ve{x}_i,\ve{x}_{-i})\label{eq12}
\end{equation}
where $U_i(\ve{x}_i,\ve{x_{-i}})\equiv U_i$ and $\ve{x}_{-i}$ is the EV charging energy profile of the other aggregators $i'\in \mathcal{N}\backslash i$.

\begin{proposition}
The game $G_{\sigma}$ has a unique pure strategy Nash equilibrium.
\end{proposition}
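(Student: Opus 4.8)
The plan is to verify the standard hypotheses of a concave $N$-player game and then upgrade existence to uniqueness through an exact potential argument. First I would establish that each strategy set $\mathcal{X}_i$ (which, since $G_{\sigma}$ is a product game, does not depend on the other aggregators' choices) is nonempty, convex and compact. Convexity and compactness are immediate: $\mathcal{X}_i$ is the subset of $\mathbb{R}^{T-t_i+1}$ cut out by the single linear equality \eqref{eq4} and the box constraints \eqref{eq6}, hence a closed, bounded polytope. Nonemptiness follows from the admissibility of $t_i$: because $t_i\in\mathcal{I}_i$ we have $T-t_i+1\ge\tau_i$, so by \eqref{eq9} aggregator $i$ retains at least enough charging slots to deliver $E_i$ within the per-slot rate caps, and a feasible profile exists.

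Next I would show that $U_i$ is continuous on $\mathcal{X}=\prod_j\mathcal{X}_j$ and strictly concave in $\ve{x}_i$ for every fixed $\ve{x}_{-i}$. Writing $p_t=\phi_tL_t+\delta_t$ with $L_t=L_{b,t}+\sum_j x_{j,t}$, the grid-cost contribution to $-C_i$ carries the term $-\phi_t x_{i,t}^2$ in each slot, and since all $\phi_t>0$ the sum $-\sum_{t=t_i}^T\phi_t x_{i,t}^2$ is a negative-definite quadratic in $\ve{x}_i$, the remaining grid terms being affine. For the deviation cost, \eqref{eq8} gives $D_{i,t}=g_{i,t}\big[\max(0,\bar{x}_{i,t}-x_{i,t})\big]^2$, where $\bar{x}_{i,t}-x_{i,t}$ is affine in $\ve{x}_i$ (the target $\bar{x}_{i,t}$ is affine in the earlier components). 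As $u\mapsto[\max(0,u)]^2$ is convex, nondecreasing and $C^1$, the composition $D_{i,t}$ is convex in $\ve{x}_i$ and continuous across the kink $x_{i,t}=\bar{x}_{i,t}$, so $-D_{i,t}$ is concave. Adding a strictly concave term to concave terms keeps $U_i$ strictly concave in $\ve{x}_i$, and joint continuity in $(\ve{x}_i,\ve{x}_{-i})$ is clear.

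With compact convex strategy sets and payoffs continuous and concave in the own variable, existence of a pure-strategy equilibrium follows from the Debreu--Glicksberg--Fan/Rosen theorem. For uniqueness I would exploit that the aggregators couple only through the aggregate load entering $p_t$, which makes $G_{\sigma}$ an exact potential game. Concretely I would take
\begin{equation*}
\Phi(\ve{x})=-\sum_{t}\Big[\phi_t\Big(\tfrac12\big(\textstyle\sum_j x_{j,t}\big)^2+\tfrac12\textstyle\sum_j x_{j,t}^2+L_{b,t}\textstyle\sum_j x_{j,t}\Big)+\delta_t\textstyle\sum_j x_{j,t}\Big]-\sum_{j}\sum_{t} D_{j,t}
\end{equation*}
and verify that $\partial\Phi/\partial x_{i,t}=\partial U_i/\partial x_{i,t}$ for all $i,t$: the deviation blocks match because $\Phi$ and $U_i$ share the identical term $-\sum_{t'}D_{i,t'}$ in aggregator $i$'s own coordinates while $D_{j,t}$ with $j\ne i$ is independent of $\ve{x}_i$, and the grid blocks match by direct differentiation. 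In each slot the quadratic part of $\Phi$ has Hessian $-\phi_t(I+\mathbf{1}\mathbf{1}^{\top})$ in $(x_{1,t},\dots,x_{N,t})$, which is negative definite, so together with the concave $-\sum_{j,t}D_{j,t}$ the potential $\Phi$ is strictly concave on $\mathcal{X}$ and attains its maximum at a unique point. Since $\mathcal{X}$ is a product set and $\Phi$ is an exact potential, the KKT conditions of $\max_{\ve{x}\in\mathcal{X}}\Phi$ decouple across aggregators and coincide with the per-aggregator optimality conditions of \eqref{eq12}; hence the equilibria of $G_{\sigma}$ are exactly the maximizers of $\Phi$, of which there is precisely one.

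The main obstacle I anticipate is the cumulative structure of the deviation penalty: because $\bar{x}_{i,t}$ depends on the earlier decisions $x_{i,t'}$, the cost $D_{i,t}$ is not separable across slots and couples the components of $\ve{x}_i$. The observation that neutralizes this is that the coupling stays entirely within one aggregator's own variables, so it neither destroys per-player concavity nor introduces cross-player terms in $\Phi$; one only needs to confirm that $[\max(0,\cdot)]^2$ composed with the affine map $\ve{x}_i\mapsto\bar{x}_{i,t}-x_{i,t}$ is convex, which it is. A secondary point to check is that strict concavity survives restriction to the affine subspace defined by the equality \eqref{eq4}, which holds because a strictly concave function remains strictly concave on any affine slice.
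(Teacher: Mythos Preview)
Your argument is correct. For existence you follow essentially the same path as the paper: verify that each $\mathcal{X}_i$ is a nonempty compact convex polytope and that $U_i$ is continuous and strictly concave in $\ve{x}_i$, then invoke the concave $N$-person existence theorem. For uniqueness, however, you take a genuinely different route. The paper dispatches uniqueness by citing Rosen's diagonal strict concavity criterion (Theorem~2 in \cite{rosen}) without spelling out the verification; you instead observe that the only cross-player coupling in $U_i$ is through the aggregate load in $p_t$, construct an explicit exact potential $\Phi$, and show that $\Phi$ is strictly concave on the product set $\mathcal{X}$, so its unique maximizer is the unique Nash equilibrium. Your approach is more self-contained and yields a by-product the paper does not exploit directly: the potential $\Phi$ is a Lyapunov function certifying convergence of best-response dynamics such as Algorithm~\ref{Al1}. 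Rosen's criterion, on the other hand, does not require a potential structure and would still apply if the deviation penalties $D_{i,t}$ depended on other aggregators' decisions; so the paper's route is in principle more general, while yours is more transparent for this particular model. Your handling of the piecewise term $D_{i,t}=g_{i,t}[\max(0,\bar{x}_{i,t}-x_{i,t})]^2$ and of the intra-aggregator coupling through $\bar{x}_{i,t}$ is also more careful than the paper's, which simply asserts negative definiteness of the Hessian.
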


\begin{proof}
For a given $\ve{x}_{-i}$, the objective function in \eqref{eq12} is strictly concave with respect to $\ve{x}_i$ as its Hessian matrix with respect to $\ve{x}_i$ is negative definite. Moreover, the strategy set $\mathcal{X}_i$ of each aggregator $i$ is non-empty, compact, and convex due to linearity of \eqref{eq4} and \eqref{eq6}. Therefore, the game $G_{\sigma}$ is a concave N-person game and has a pure strategy Nash equilibrium \cite{rosen}. Moreover, that Nash equilibrium is unique as the objective function and the strategy sets in \eqref{eq12} satisfy Theorem 2 in \cite{rosen}.
\end{proof}

The Nash equilibrium of the game $G_{\sigma}$ is denoted by $\ve{x}^*=(\ve{x}_1^*,\ve{x}_2^*,\dotsm,\ve{x}_N^*)$. In this paper, $\ve{x}^*$ is approximated using the iterative best-response algorithm given in Algorithm~\ref{Al1}. The algorithm terminates when the relative distance of $\ve{x}$ between two consecutive iterations is very small, for example, ${\|\ve{x}^{(k)}-\ve{x}^{(k-1)}\|}_2/{\|\ve{x}^{(k)}\|}_2 \leq\varepsilon$ where $\varepsilon$ is a very small positive value and $k$ is the iteration number.

 \begin{remark}
If the charging model allowed random arrivals and departures of EVs during the charging time frame $\mathcal{T}$, the game $G_{\sigma}$ among the aggregators would turn into a game with incomplete information where uncertainty occurs over the strategy spaces $\mathcal{X}_i$ available to each aggregator $i$. 
\end{remark}

 \begin{algorithm}
 \caption{Game to obtain the Nash Equilibrium of the game $G_{\sigma}$}\label{Al1}
 \begin{algorithmic}[1]
 \STATE Using EV charging start time $t_i$, randomly initialize $\ve{x}_i$ for each aggregator $i$ such that $\ve{x}_i \in \mathcal{X}_i$ and set $k\leftarrow0$.
 \WHILE {termination criterion is not satisfied}
\STATE{Set $k \leftarrow k+1$.}
   \FOR {$\text{each aggregator}~i$}
  \STATE Aggregator $i$ solves \eqref{eq12} and determines $\tilde{\ve{x}}_i$ using the temporal aggregate grid load vector, excluding $i$'s load, $\ve{L}^{(k-1)}_{-i}=\Big(L_{t_i,-i}^{(k-1)},\dotsm,L_{t,-i}^{(k-1)},\dotsm,L_{T,-i}^{(k-1)}\Big)$ at the previous iteration $k-1$. Here, $L_{t,-i}^{(k-1)}=L_t^{(k-1)}-x_{i,t}^{(k-1)}$.
  \ENDFOR
  \ENDWHILE
 \RETURN The Nash equilibrium $\ve{x}^*$.
 \end{algorithmic}
 \end{algorithm}
 
 \section{First Stage Game: Participation Time Selection Game}\label{sec:6}
 
In this section, the selection of optimal EV charging start times of the aggregators $\mathcal{N}$ at the first stage of the game $\Upsilon$ is described. Note that, since the solution analysis of the game $\Upsilon$ moves backwards, the game at the first stage has a payoff for its each action profile $\ve{\sigma}\in \mathcal{I}$ equals to the Nash equilibrium payoff obtained for its corresponding subgame $G_{\sigma}\in \mathcal{G}$ at the second stage. Let $\Phi$ denote the non-cooperative game among the aggregators $\mathcal{N}$ at the first stage that has payoffs equivalent to the Nash equilibrium payoffs of subgames $\mathcal{G}$ at the second stage. Explicitly, the game $\Phi$ can be described as follows.
 
\begin{itemize}
\item{\textit{Players}}: The set of aggregators $\mathcal{N}$.
\item{\textit{Strategies}}: Each aggregator $i$ determines $t_i\in \mathcal{I}_i$ to maximize payoff.
\item{\textit{Payoffs}}: If the aggregators' EV charging start time profile is $\ve{\sigma}\in \mathcal{I}$, then each aggregator $i$ receives a payoff given by

\begin{equation}
F_i(\ve{\sigma})=-\sum_{t=t_i}^T p_t^*(\ve{\sigma})x_{i,t}^*(\ve{\sigma})+ D_{i,t}^*(\ve{\sigma})   \label{eq13}
\end{equation}
where $p_t^*(\ve{\sigma})$ and $x_{i,t}^*(\ve{\sigma})$ are the unit grid price and the EV charging energy amount of aggregator $i$ at time $t$ obtained at the Nash equilibrium of the game $G_{\sigma}$, respectively. Furthermore, $D_{i,t}^*(\ve{\sigma})$ is the cost given in \eqref{eq8} after obtaining the Nash equilibrium of the game $G_{\sigma}$. 
\end{itemize}

Let $\ve{\sigma}_{-i}$ denote the EV charging start time strategy profile of all aggregators $\mathcal{N}$ excluding aggregator $i$. Then in the game $\Phi$, each aggregator $i$ maximizes the payoff given in \eqref{eq13} by determining the optimal $t_i \in \mathcal{I}_i$ for a given $\ve{\sigma}_{-i}$.

\begin{proposition}
A Nash equilibrium strategy profile of the game $\Phi$ leads to a subgame perfect Nash equilibrium of the game $\Upsilon$.
\end{proposition}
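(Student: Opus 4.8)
The plan is to run backward induction on the finite-horizon two-stage game $\Upsilon$ and then check the definition of subgame perfection directly. Recall that a strategy profile is a subgame perfect Nash equilibrium of $\Upsilon$ precisely when its restriction is a Nash equilibrium in each of the $(K+1)$ subgames, namely the whole game $\Upsilon$ together with the $K$ proper second-stage subgames $G_1,\dots,G_K$. I would therefore exhibit a candidate profile built from the given $\Phi$-equilibrium and verify it subgame by subgame.

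First I would construct the candidate. In the second stage, after \emph{any} start-time profile $\ve{\sigma}\in\mathcal{I}$, prescribe that the aggregators play the unique Nash equilibrium $\ve{x}^*(\ve{\sigma})$ of the subgame $G_{\sigma}$ guaranteed by Proposition~1; in the first stage, prescribe that each aggregator $i$ plays $t_i^*$, where $\ve{\sigma}^*=(t_1^*,\dots,t_N^*)$ is the hypothesized Nash equilibrium of $\Phi$. The uniqueness asserted in Proposition~1 is exactly what makes this well defined: it pins down a single continuation outcome for every $\ve{\sigma}$, so that the first-stage payoff $F_i(\ve{\sigma})$ in \eqref{eq13} is unambiguous and $\Phi$ is a genuine normal-form game whose payoffs coincide with the continuation values of $\Upsilon$.

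Next I would verify the Nash property in each proper subgame $G_{\sigma}$. By construction the prescribed continuation play there is $\ve{x}^*(\ve{\sigma})$, which by Proposition~1 is the Nash equilibrium of $G_{\sigma}$, so no aggregator can improve its stage-two payoff \eqref{eq11}. It then remains to verify the Nash property in the whole game $\Upsilon$ by a one-shot-deviation argument. A deviation confined to the second stage is already excluded by the previous step. For a first-stage deviation of aggregator $i$ from $t_i^*$ to any $t_i'\in\mathcal{I}_i$, the continuation reaches the subgame $G_{(t_i',\,\ve{\sigma}^*_{-i})}$ and is played at its equilibrium, so aggregator $i$'s realized payoff equals $F_i(t_i',\ve{\sigma}^*_{-i})$ by \eqref{eq13}. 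Since $\ve{\sigma}^*$ is a Nash equilibrium of $\Phi$, we have $F_i(\ve{\sigma}^*)\geq F_i(t_i',\ve{\sigma}^*_{-i})$ for every $t_i'$ and every $i$, so no first-stage deviation is profitable.

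Combining the two verifications shows that the constructed profile is a Nash equilibrium in every subgame of $\Upsilon$, hence a subgame perfect Nash equilibrium whose first-stage component is the given $\Phi$-equilibrium $\ve{\sigma}^*$. The only genuinely delicate point I expect to be the main obstacle is the bridge between the normal-form game $\Phi$ and the extensive-form game $\Upsilon$: namely, that $F_i$ faithfully records the value obtained when the unique second-stage equilibrium is played. Once uniqueness from Proposition~1 secures that correspondence, the finiteness of the first-stage action sets $\mathcal{I}_i$ renders the backward-induction step routine.
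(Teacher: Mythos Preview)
Your proposal is correct and follows essentially the same backward-induction approach as the paper: the paper's proof simply observes that $\Phi$ is the reduced game obtained by replacing each second-stage subgame $G_{\sigma}$ with its Nash equilibrium outcome, and then invokes standard backward induction (citing Fudenberg--Tirole) to conclude. Your version is in fact more explicit than the paper's, since you spell out the candidate profile, verify optimality in each proper subgame via Proposition~1, and handle first-stage deviations directly; the paper compresses all of this into a one-paragraph appeal to the reduced-game construction.
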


\begin{proof}
The proof of the proposition immediately follows intuitions of backward induction \cite{Fudenberg}. In particular, when considering the extensive form (tree form) of the two-stage game $\Upsilon$, the game $\Phi$ is the reduced game of the game $\Upsilon$ after eliminating all subgames $\mathcal{G}$ at the second stage by assigning their Nash equilibrium outcomes to the outcomes after the first stage of the game $\Upsilon$. For example, with respect to the strategy profile $\ve{\sigma}\in \mathcal{I}$, the payoff of each aggregator $i$ in \eqref{eq13} is defined by assuming that the aggregators $\mathcal{N}$ will play the Nash equilibrium of the corresponding subgame $G_{\sigma}$ at the second stage. Therefore, a Nash equilibrium of the game $\Phi$ leads to a subgame perfect Nash equilibrium of the game $\Upsilon$.
\end{proof}

\begin{remark}
Once the aggregators $\mathcal{N}$ have determined their optimal EV charging start time profile $\ve{\sigma}^*$ by playing the game $\Phi$, they play the non-cooperative game $G_{\sigma^*}\in \mathcal{G}$ where $G_{\sigma^*}$ is the subgame at the second stage of the game $\Upsilon$ subsequent to $\ve{\sigma}^*$. 
\end{remark}

In the long run, the aggregators $\mathcal{N}$ may change their behavior with respect to selecting EV charging start times $t_i$. Therefore, to determine solutions for the game $\Phi$, aggregators' empirical frequencies of choosing start times $t_i \in \mathcal{I}_i$ are considered under the notion of mixed strategies. In this scenario, the aggregators $\mathcal{N}$ face uncertainty in decision-making with their probabilistic choices of EV charging start times. Hence, in this paper, the strategic behavior of the aggregators $\mathcal{N}$ is studied under two user behavioral models: expected utility theory, i.e., the conventional game-theoretic approach, and prospect theory that learns the subjective non-ideal behavior of users \cite{Kahneman}.

\subsection{Time Selection under Expected Utility Theory}\label{sec:6_1}

To analyze the game $\Phi$ under mixed strategies, it is considered that each aggregator $i$ evaluates the probability distribution over their strategy set $\mathcal{I}_i$ to maximize expected payoff. In classical game theory, expected utility theory is the main platform that is used to describe user behavior and payoffs under the notion of mixed strategy. In this paper, the expected payoff of each aggregator $i$ under expected utility theory is given by

\begin{equation}
Q_{i,\Phi}^{\text{EUT}}(\ve a) = \sum_{\ve{\sigma}\in \mathcal{I}} F_i(\ve{\sigma}) \prod_{j\in \mathcal{N}} a_j(t_j)  \label{eq14}
\end{equation}
where $\ve a =(\ve{a}_i,\ve{a}_{-i}),~\ve{a}_i=(a_i(1),a_i(2),\dotsm,a_i(\tilde{t}_i)),~a_i(t_i)$ is the probability that aggregator $i$ selects time slot $t_i$ as the EV charging start time, and $\ve{a}_{-i}$ is the probabilities of the other aggregators $i'\in \mathcal{N}\backslash i$ of choosing their EV charging start times.

Intuitively speaking, the payoff calculation under expected utility theory implies that players assess probabilities of their opponents' actions identical to their objective likelihoods. However, empirical evidence from sociology infers that this assumption may not be valid in many real world applications as often users, such as car park managers in our case study, underweight high probability events and overweight low probability events when they face risk and uncertainty \cite{Kahneman}.

\subsection{Time Selection under Prospect Theory}\label{sec:6_2}

The intuition behind prospect theory is to describe user behavior that cannot be understood by assuming rational choices of users as in normative expected utility theory \cite{Kahneman}. In the real world, it has been shown that people exhibit subjective behavior rather than objective behavior in payoff maximization problems. This section investigates how the aggregators $\mathcal{N}$ maximize their utilities in the game $\Phi$ while subjectively evaluating their neighbors' behavior. 

Prospect theory uses weighting effects to characterize the subjective behavior of users \cite{Kahneman} and in particular, probability weighting functions are widely investigated \cite{Neilson, prelec}. In general, a probability weighting function $w_i(a)$ indicates the subjective evaluation of aggregator $i$ on an action played with probability $a$. In this paper, Prelec's probability weighting function \cite{prelec} is used that is given by

\begin{equation}
w_i(a)=\text{exp}(-(- \text{ln}~a)^{\alpha_i}) \label{eq15}
\end{equation}
where $\alpha_i$ is a weighting parameter of aggregator $i$ and $0<\alpha_i\leq 1$. It is important to note that aggregator $i$ becomes more subjective and deviates more from objective evaluation over probabilities as $\alpha_i$ moves from $1$ towards $0$. On the other hand, $\alpha_i=1$ implies that aggregator $i$ perceives the action with probability $a$ objectively, and therefore, their subjective evaluation  and objective evaluation are identical. 

Here, it is assumed that the subjective probabilities of aggregator $i$ of their own actions are equal to the objective probabilities. Then the expected payoff of each aggregator $i$ under prospect theory is given by

\begin{equation}
Q_{i,\Phi}^{\text{PT}}(\ve{a})=\sum_{\ve{\sigma}\in \mathcal{I}} F_i(\ve{\sigma})a_i(t_i)\bigg(\prod_{i'\in \mathcal{N}\backslash i} w_i(a_{i'}(t_{i'}))\bigg). \label{eq16}
\end{equation}

\subsection{$\epsilon$-Nash equilibria}\label{sec:6_3}

In this section, the solutions for the game $\Phi$ under expected utility theory and prospect theory are studied. In particular, $\epsilon$-Nash equilibria for the game $\Phi$ are investigated under two models due to their attractive properties such as computational usefulness and the guarantee that every Nash equilibrium is surrounded by $\epsilon$-Nash equilibria for small $\epsilon>0$ \cite{gametheoryessentials}. A mixed strategy profile $\hat{\ve{a}}=(\hat{\ve{a}}_i,\hat{\ve{a}}_{-i})$ is an $\epsilon$-Nash equilibrium for the game $\Phi$ if it satisfies

\begin{equation}
Q_{i,\Phi}(\hat{\ve{a}}_i,\hat{\ve{a}}_{-i})\geq Q_{i,\Phi}(\ve{a}'_i,\hat{\ve{a}}_{-i})-\epsilon,~\forall \ve{a}'_i\in \mathcal{A}_i\backslash\hat{\ve{a}_i},~\forall{i\in \mathcal{N}}\label{eq17}
\end{equation}
where $\hat{\ve{a}}_{-i}$ is the equilibrium mixed strategy profile of the other aggregators $i'\in \mathcal{N}\backslash i,~\mathcal{A}_i$ is the set of all possible mixed strategy profiles of aggregator $i$ over $\mathcal{I}_i$, and $\epsilon>0$. Note that, in \eqref{eq17}, $Q_{i,\Phi}$ generalizes $Q_{i,\Phi}^{\text{EUT}}$ under expected utility theory and $Q_{i,\Phi}^{\text{PT}}$ under prospect theory.

\begin{remark}
Proposition 2 implies that finding a mixed strategy $\epsilon$-Nash equilibrium of the game $\Phi$ under expected utility theory or prospect theory leads to a mixed strategy subgame perfect $\epsilon$-Nash equilibrium \cite{Flesch} of the game $\Upsilon$.
\end{remark}

For the game $\Phi$, under each user behavioral model, an $\epsilon$-Nash equilibrium that is closely located to a mixed strategy Nash equilibrium is explored. To this end, the iterative algorithm proposed in \cite{Wang} is used where the algorithm was proven to converge to an $\epsilon$-Nash equilibrium close to a mixed strategy Nash equilibrium of a finite non-cooperative game under both expected utility theory and prospect theory. In a nutshell, the algorithm is given by

\begin{equation}
\ve{a}^{(k+1)}_i=\ve{a}^{(k)}_i+\frac{\beta}{k+1}(\ve{z}^{(k)}_i-\ve{a}^{(k)}_i),~0<\beta<1 \label{eq18}
\end{equation}
where $\beta$ is the inertia weight. Moreover, $\ve{z}^{(k)}_i=\big(z^{(k)}_i{(t_{i,1})},\dotsm,z^{(k)}_i(t_{i,\tilde{t}_i})\big)$ where

\begin{equation}
z_i^{(k)}(t_{i,t})=\begin{cases}
                 1,~\text{if}~t_{i,t}=\operatornamewithlimits{argmax}\limits_{t_i\in \mathcal{I}_i}~q_i(t_i,\ve{a}^{(k-1)}_{-i}), \\
                 0,~\text{otherwise}.
                   \end{cases}\label{eq19}
\end{equation}
Here, $q_i(t_i,\ve{a}^{(k-1)}_{-i})$ is the expected payoff of aggregator $i$ when they select the pure strategy $t_i$ for a given mixed strategy profile $\ve{a}^{(k-1)}_{-i}$ of their opponents $i'\in \mathcal{N}\backslash i$ at the iteration $k-1$. For prospect theory, $\ve{a}^{(k-1)}_{-i}$ considers the weighted probabilities of the aggregators $i'\in \mathcal{N}\backslash i$ at the iteration $k-1$. When the above algorithm converges, the $\epsilon$-Nash equilibrium with regard to $\ve{a}$ is found under expected utility theory and prospect theory. 

\section{Simulation Results}\label{sec:7}

\subsection{Simulation Setup}\label{sec:7_1}

To numerically examine the impacts of the EV charging competition among the EV aggregators with their ideal and non-ideal behavior, a system with five EV aggregators $(N=5)$ was considered. The EV charging scheduling time frame $\mathcal{T}$ spans from 8.00 AM to 4.00 PM and $T=16$ with $\Delta t=30~\text{min}$.

It was assumed that the EV fleet at each aggregator $i$ has 10 EVs, and EV chargers at each aggregator $i$ uses Level 2 charging. Level 2 charging is the primary approach used for EV charging at public places and typically uses EV charging rates between 3 kW and 20 kW \cite{Yilmaz}. For simulations in this paper, three types of EVs were considered, namely, Toyota Prius (3.8 kW,~4.4 kWh), Chevrolet Volt (3.8 kW,~16 kWh), and Nissan Leaf (3.3 kW,~24 kWh) \cite{Yilmaz}. The distributions of different types of EVs at each aggregator $i$ are given in Table~\ref{table 1}.

\begin{table}[b!]
\footnotesize
\caption{Number of different types of EVs available at each aggregator $i\in \mathcal{N}$}\vspace{5 mm}\label{table 1}
\centering
\begin{tabular}{|c|c|c|c|}
\hline
$i$ & \makecell*{Toyota Prius}&\makecell*{Chevrolet Volt} & \makecell*{Nissan Leaf} \\
\hline
1 &2& 3 & 5 \\
\hline
2 &2& 5 & 3\\
\hline
3 &3& 2 & 5 \\
\hline
4 &3& 5 & 2 \\
\hline
5 & 5&3 & 2 \\
\hline
\end{tabular}
\end{table}

Initial percentage SOC levels of the EVs controlled by each aggregator $i$ were randomly chosen between 0\% and 100\% of EVs' maximum energy storage capacities. It was assumed that all EVs should be charged to 100\% of their maximum energy storage capacities by the time $\mathcal{T}$. The charging efficiency $\eta_i$ of EV chargers controlled by each aggregator $i$ was assumed to be $0.864$ that is equivalent to the average Level 2 charging efficiency given in \cite{Forward}. Target energy demand profiles of each aggregator $i$ were generated as explained in Section~\ref{sec:3_2}.
For each aggregator $i$, $g_{i,t}$ was randomly chosen from the set $\{10,11,\dotsm,20\}$. Under these circumstances, Table~\ref{table 2} presents the possible EV charging start time strategy profiles $\mathcal{I}_i$ for the considered set of aggregators.

\begin{table}[t!]
\renewcommand{\arraystretch}{1.3}
\caption{EV charging start time strategy profiles $\mathcal{I}_i,~\forall i\in \mathcal{N}$}\vspace{5 mm}\label{table 2}
\centering
\begin{tabular}{|P{2.8 cm}|c|}
\hline
$i$ & \makecell*{EV Charging Start Time\\ Strategy Profiles $\mathcal{I}_i$} \\
\hline
1 &$\{1,2,\dotsm,5\}$ \\
\hline
2 &$\{1,2,\dotsm,7\}$\\
\hline
3 &$\{1,2,\dotsm,10\}$ \\
\hline
4 &$\{1,2,\dotsm,8\}$ \\
\hline
5 & $\{1,2,\dotsm,11\}$ \\
\hline
\end{tabular}
\end{table}

In simulations, $\ve{L}_b$ was assumed to be the aggregate energy demand of 200 residential facilities where the average energy demand profile between 8.00 AM and 4.00 PM is equivalent to the average energy demand profile of the Western Power Network in Australia between 8.00 AM and 4.00 PM in a Spring day \cite{WPNdata}. For grid pricing, $\phi_t=0.2$~AU cents/$\text{kWh}^2$ and $\delta_t=0.2$~AU cents/kWh at each time $t\in \mathcal{T}$ so that the peak unit energy price of the grid when all EVs at each aggregator $i$ are charged using their maximum charging power rates is equivalent to the peak usage domestic time-of-use tariff in \cite{origin}.

For the algorithm in \eqref{eq18}, initial probability distributions $\ve{a}_i^{(0)}$ were selected such that $\sum_{\mathcal{I}_i}a_i(t_i)=1$ for each aggregator $i$, and $\beta=0.7$. To compare results, an uncoordinated EV charging scenario was considered where all aggregators begin to charge their EV fleets from the time slot 1 using EVs' maximum charging power rates. The uncoordinated charging scenario uses the same energy cost models for the aggregators $\mathcal{N}$ in Section~\ref{sec:3}.

\subsection{Results and Discussion}\label{sec:7_2}

Fig.~\ref{fig1} shows the expected cost savings of each aggregator in $\mathcal{N}$ compared to the uncoordinated charging scenario under expected utility theory and prospect theory for two different $\alpha\in (0,1]$ values $(\alpha=0.1~\text{and}~\alpha=0.7)$. Here, it is assumed that $\alpha_i=\alpha,~\forall i\in \mathcal{N}$ where the probability weighting parameter $\alpha$ is applied according to \eqref{eq15}. It is important to note that when $\alpha=0.1$ aggregators become more subjective and non-ideal than when $\alpha=0.7$ because as $\alpha$ tends to $0$ from $1$, aggregators deviate further from the objective behavior assumed in expected utility theory. Table~\ref{table 3} presents the mixed strategy $\epsilon$-Nash equilibria obtained for the game $\Phi$ under expected utility theory and prospect theory.
\begin{figure}[t!]
\centering
\includegraphics[width=0.95\columnwidth]{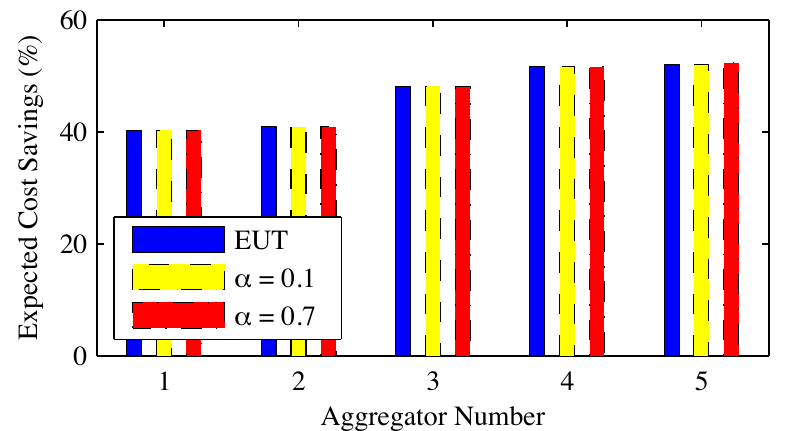}
\caption{Expected cost savings for the aggregators $\mathcal{N}$ under expected utility theory and prospect theory when $\alpha=0.1$ and $\alpha=0.7$.}
\label{fig1}
\end{figure}

\begin{table*}[t]
\renewcommand{\arraystretch}{1.3}
\caption{Percentage participation probabilities of the aggregators $i\in\mathcal{N}$ for $t_i\in \mathcal{I}_i$ under expected utility theory (EUT) and prospect theory (PT) when $\alpha = 0.7,~0.1$}\vspace{5 mm}
\label{table 3}
\centering
\begin{tabular}{|c|c|c|c|c|cIc|c|c|c|cIc|c|c|c|c|}
\hline
{}&\multicolumn{5}{cI}{\makecell*{EUT probabilities \\ i.e., when $\alpha=1$~(\%)}}&\multicolumn{5}{cI}{\makecell*{PT probabilities  \\when $\alpha=0.7$ (\%)}}&\multicolumn{5}{c|}{\makecell*{PT probabilities \\ when $\alpha=0.1$ (\%)}} \\
  \cline{2-16}
$t_i$ &\makecell*{$i=1$} &\makecell*{$i=2$} &\makecell*{$i=3$} & \makecell*{$i=4$} & \makecell*{$i=5$} & \makecell*{$i=1$}& \makecell*{$i=2$} & \makecell*{$i=3$} & \makecell*{$i=4$}& \makecell*{$i=5$}& \makecell*{$i=1$} & \makecell*{$i=2$} & \makecell*{$i=3$} & \makecell*{$i=4$}& \makecell*{$i=5$} \\
\hline
1 & 99.75 & 99.75 & 99.04  & 4.48 & 1.59 & 99.81 & 99.81 & 99.81 & 99.81 & 6.31 & 99.12 & 99.12 & 99.12 & 99.12 & 99.12\\
\hline
2 & 0.03 & 0.03 & 0.03 & 0.03 & 0.03 & 0.02 & 0.02 & 0.03 & 0.02 & 0.02 & 0.1 & 0.1 & 0.08 & 0.1 & 0.1\\
\hline
3 & 0.03 & 0.03 & 0.73 & 95.31 & 98.18 & 0.02 & 0.02 & 0.02 & 0.02 & 93.52 & 0.1 & 0.1 & 0.1 & 0.1 & 0.05\\
\hline
4 & 0.03 & 0.04 & 0.02 & 0.03 & 0.01 & 0.02 & 0.02 & 0.02 & 0.02 & 0.01 & 0.1 & 0.08 & 0.1 & 0.1 & 0.03\\
\hline
5 & 0.16 & 0.05 & 0.03 & 0.03 & 0.01 & 0.13 & 0.04 & 0.02 & 0.02 & 0.02 & 0.58 & 0.2 & 0.1 & 0.1 & 0.1\\
\hline
6 & - & 0.05 & 0.03 & 0.01 & 0.03 & - & 0.04 & 0.02 & 0.02 & 0.02 & - & 0.2 & 0.1 & 0.1 & 0.1\\
\hline
7& - & 0.05 & 0.03 & 0.03 & 0.03 & - & 0.05 & 0.02 & 0.02 & 0.02 & - & 0.2 & 0.1 & 0.1 & 0.1\\
\hline
8& - & - & 0.03 & 0.08 & 0.03 & - & - & 0.02 & 0.07 & 0.02 & - & - & 0.1 & 0.28 & 0.1\\
\hline
9& - & - & 0.03 & - & 0.03 & - & - & 0.02 & - & 0.02 & - & - & 0.1 & - & 0.1\\
\hline
10& - & - & 0.03 & - & 0.03 & - & - & 0.02 & - & 0.02 & - & - & 0.1 & - & 0.1\\
\hline
11 & - & - & - & - & 0.03 & - & - & - & - & 0.02 & - & - & - & - & 0.1\\
\hline

\end{tabular}
\end{table*}

According to the table, when aggregators have more subjective behavior with $\alpha=0.1$, the equilibrium probability distributions over $\mathcal{I}_i$ of each aggregator $i$ deviate from that of expected utility theory. In particular, when $\alpha=0.1$, the fourth and fifth aggregators prefer to participate from the time slot 1, whereas they prefer to participate from the time slot 3 under expected utility theory. When aggregators behave closer to the objective behavior by adopting $\alpha=0.7$, the fifth aggregator is more likely to start EV charging in the time slot 3 and the fourth aggregator prefers the time slot 1. Despite the changes in probabilistic choices of choosing an EV charging start time, Fig.~\ref{fig1} depicts that for each $\alpha$ value in prospect-theoretic analysis, the expected cost savings for the aggregators $\mathcal{N}$ remain almost same as the savings obtained under expected utility theory $(\alpha=1)$.

Fig.~\ref{fig2} illustrates the aggregators' expected EV charging grid loads in the time slot 1 under prospect theory with $\alpha=0.1$ and $\alpha=0.7$ compared to that of expected utility theory. Fig.~\ref{fig3} shows the temporal variation of the expected aggregate grid load after studying the EV charging competition under expected utility theory and prospect theory when $\alpha=0.1$.
\begin{figure}[t!]
\centering
\includegraphics[width=0.95\columnwidth]{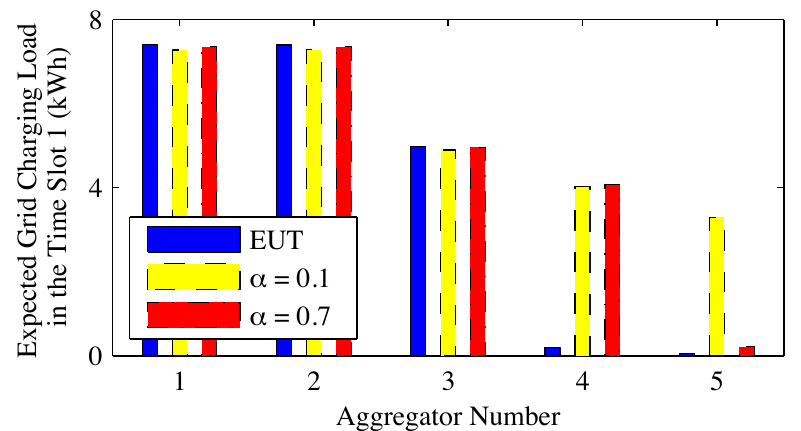}
\caption{Expected EV charging loads on the grid in the time slot 1 of the aggregators $\mathcal{N}$ under expected utility theory and prospect theory when $\alpha=0.1$ and $\alpha=0.7$.}
\label{fig2}
\end{figure}
\begin{figure}[t!]
\centering
\includegraphics[width=0.95\columnwidth]{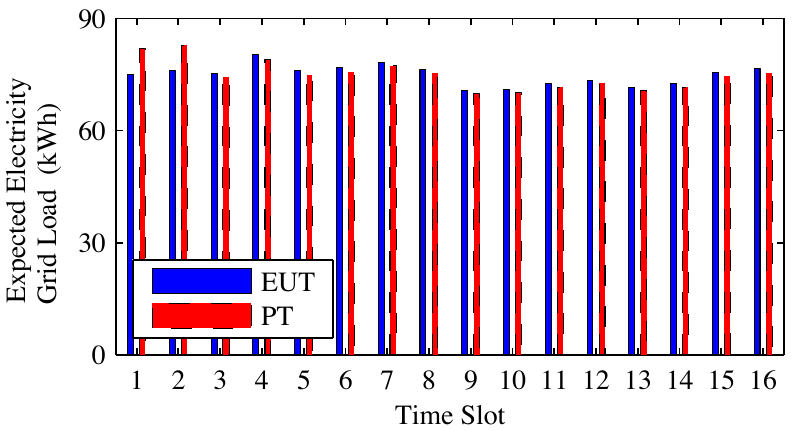}
\caption{Temporal variation of expected aggregate grid load under expected utility theory and prospect theory when $\alpha=0.1$.}
\label{fig3}
\end{figure}
Fig.~\ref{fig2} shows that when $\alpha=0.1$, the fifth aggregator incurs a significant EV charging load on the grid in the time slot 1 compared to their expected EV charging grid loads in expected utility theory and in prospect theory with $\alpha=0.7$. Similarly, when $\alpha=0.1$, the fourth aggregator also has a significant charging load in the time slot 1 compared to expected utility theory. Similar trends were observed for the time slot 2 as well. This is because both aggregators prefer to participate from the time slot 1 when $\alpha=0.1$, whereas, in expected utility theory, they prefer to participate from the time slot 3 (see Table~\ref{table 3}). As shown in Fig.~\ref{fig3}, the increase in EV charging loads of the fourth and fifth aggregators when $\alpha=0.1$ results in nearly 9\% higher load on the grid in each time slot (time slot 1 and 2) than expected utility theory.

Next, the influences of EV charging competition among the EV aggregators were studied across a range of possible $\alpha$ values. Here, $\alpha$ was varied in the range $(0,1]$. Fig.~\ref{fig4} illustrates the average expected cost savings of the aggregators $\mathcal{N}$ compared to the uncoordinated charging scenario with respect to changes in $\alpha$. Fig.~\ref{fig5} depicts the variations of expected peak-to-average ratio reductions compared to the uncoordinated charging scenario with varying $\alpha$. From the grid's perspective, a higher peak-to-average ratio reduction is preferred because it implies better peak load regulation compared to the uncoordinated EV charging case.
\begin{figure}[b!]
\centering
\includegraphics[width=0.95\columnwidth]{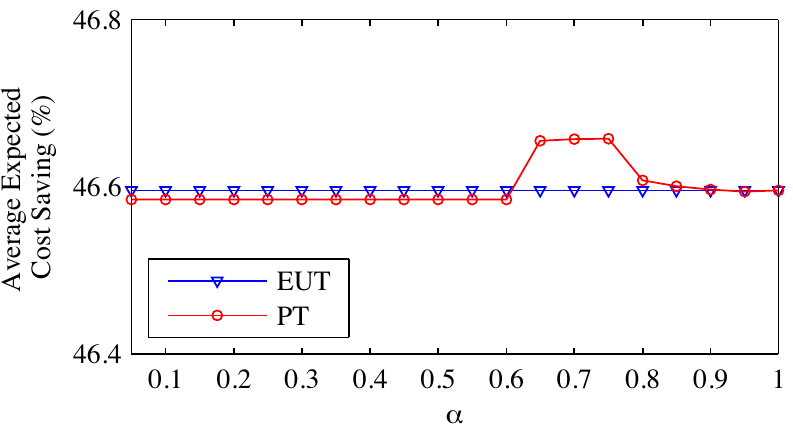}
\caption{Average of expected cost savings of the aggregators $\mathcal{N}$ with different $\alpha$.}
\label{fig4}
\end{figure}
\begin{figure}[t!]
\centering
\includegraphics[width=0.95\columnwidth]{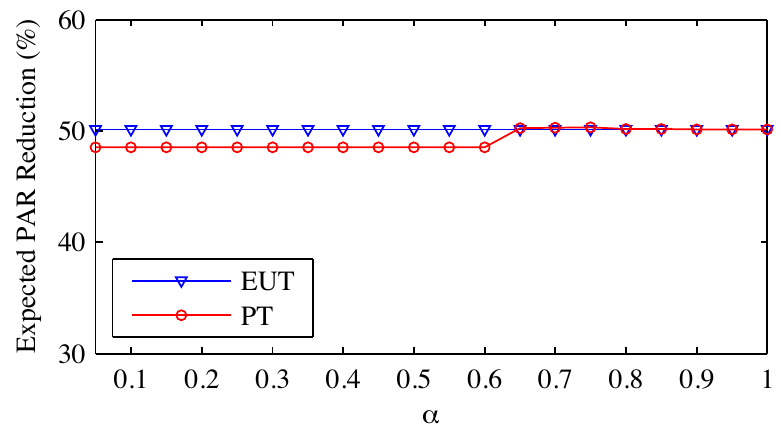}
\caption{Expected peak-to-average ratio (PAR) reduction with different $\alpha$.}
\label{fig5}
\end{figure}
According to Fig.~\ref{fig4}, when $0\leq \alpha \leq 0.6$, the average expected cost savings of the aggregators $\mathcal{N}$ are slightly lower than that obtained under expected utility theory. In particular, compared to the average expected cost saving under expected utility theory, this reduction is insignificant with only 0.01\%. When $0.65\leq\alpha \leq 0.75$, each aggregator in $\mathcal{N}$ receives a higher average expected cost saving with nearly 0.1\% increase. On the other hand, Fig.~\ref{fig5} shows that the expected peak-to-average ratio reductions remain nearly unchanged across the range of $\alpha$. When $0\leq\alpha\leq 0.6$, the peak-to-average ratio reductions are slightly lower than those achieved under expected utility theory. This is because, in this range of $\alpha$, the EV charging competition among the aggregators leads to higher peak loads on the grid than the peak grid load under expected utility theory, for example, as shown in Fig.~\ref{fig3}.

Finally, the impacts of the EV charging competition when each aggregator $i$ has different $\alpha$ values, i.e., $\alpha_i=\alpha_j$ for $\{i.j\}\in \mathcal{N},~i\neq j$, were investigated. To this end, it was considered $\ve{\alpha}=(0.7,~0.5,~0.9,~0.1,~0.3)$ as the matrix of $\alpha_i$ of the aggregators $\mathcal{N}$ under prospect theory. All other parameters are as specified for the previous simulation. Fig.~\ref{fig6} shows the expected cost savings for each aggregator in $\mathcal{N}$ compared to the uncoordinated EV charging scenario under expected utility theory and prospect theory.
\begin{figure}[t!]
\centering
\includegraphics[width=0.95\columnwidth]{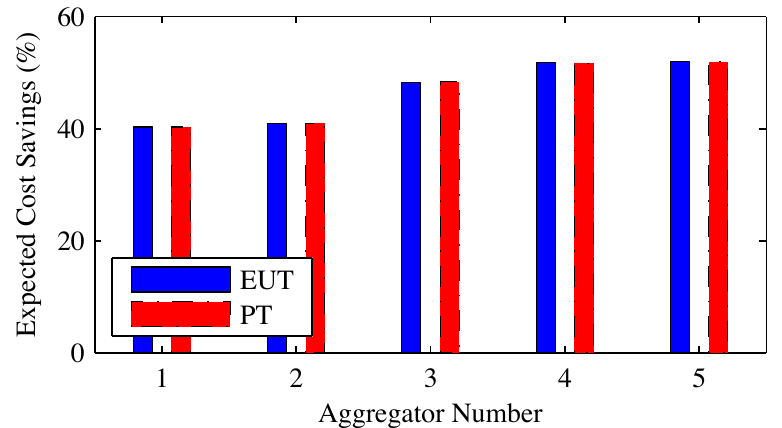}
\caption{Comparison of the expected cost savings of the aggregators $\mathcal{N}$ under expected utility theory and prospect theory when $\ve{\alpha}=(0.7,~0.5,~0.9,~0.1,~0,3)$.}
\label{fig6}
\end{figure}
Similar to the case where $\alpha_i=\alpha,~\forall i\in \mathcal{N}$, here the prospect-theoretic energy cost savings remain nearly the same as expected cost savings under expected utility theory. On the other hand, the expected peak-to-average ratio reduction of 50.16\% under expected utility theory increases slightly to 50.30\% reduction in the prospect-theoretic scenario with $\ve{\alpha}=(0.7,~0.5,~0.9,~0.1,~0.3)$.

\section{Conclusion}\label{sec:8}

This paper investigated impacts of non-ideal, subjective participating behavior of multiple electric vehicle (EV) aggregators, which might be run by car park managers, interacting in a coordinated EV charging competition. In the presented EV charging strategy, each aggregator minimizes their individual EV charging costs by selecting optimal EV charging start times and energy profiles. The EV charging competition among the aggregators was modeled by developing a two-stage non-cooperative game among the aggregators, which was studied under prospect theory to incorporate non-ideal participating actions of the aggregators. The non-cooperative EV charging game can obtain a subgame perfect $\epsilon$-Nash equilibrium when the game is played with either ideal, or non-ideal, participating actions of the aggregators. Through numerical simulations, we have shown that the benefits of the coordinated EV charging strategy, in terms of EV charging energy cost reductions and peak load regulation, are significantly resilient to non-ideal participating actions taken by the aggregators.  

Future work could focus on extending the analysis in this work to investigate the charging system by including uncertainties that arise with stochastic behavior of EVs using insights from Bayesian game theory \cite{gametheoryessentials}. Further study could incorporate both grid-to-vehicle and vehicle-to-grid operations with demand-side management so that EV storage devices can be utilized as distributed energy resources for energy management. Moreover, it would also be interesting to investigate the effects of non-ideal consumer behavior on an EV charging management framework that incorporates renewable energy sources.


 \newcommand{\noop}[1]{}


\begin{IEEEbiography}
{Chathurika P. Mediwaththe} (S'12) received the B.Sc. degree (Hons.) in electrical and electronic engineering from the University of Peradeniya, Sri Lanka. She completed the PhD degree in electrical engineering at the University of New South Wales, Sydney, NSW, Australia in 2017. From 2013-2017, she was a research student with Data61-CSIRO (previously NICTA), Sydney, NSW, Australia. She is currently a research fellow at the Australian National University, Canberra, ACT, Australia. Her current research interests include electricity demand-side management, smart grids, decision making (game theory and optimization) for resource allocation in distributed networks, and machine learning. 
\end{IEEEbiography}

\begin{IEEEbiography}
{David B. Smith} (S'01 - M'04) received the B.E. degree in electrical engineering from the University of New South Wales, Sydney, NSW, Australia, in 1997, and the M.E. (research) and Ph.D. degrees in telecommunications engineering from the University of Technology, Sydney, Ultimo, NSW, Australia, in 2001 and 2004, respectively. 

Since 2004, he was with National Information and Communications Technology Australia (NICTA, incorporated into Data61 of CSIRO in 2016), and the Australian National University (ANU), Canberra, ACT, Australia, where he is currently a Senior Research Scientist with Data61 CSIRO, and an Adjunct Fellow with ANU. He has a variety of industry experience in electrical and telecommunications engineering. His current research interests include wireless body area networks, game theory for distributed networks, mesh networks, disaster tolerant networks, radio propagation, 5G networks, antenna design, distributed optimization for smart grid and privacy for networks.  He has published over 100 technical refereed papers. He has made various contributions to IEEE standardisation activity. He has served (or is serving) on the technical program committees of several leading international conferences in the fields of communications and networks. Dr. Smith was the recipient of four conference Best Paper Awards.
\end{IEEEbiography}

\end{document}